\preto{\abstractkeywords}{\nolinenumbers}
\newtheorem{definition}[theorem]{Definition}
\newtheorem{lemma}[theorem]{Lemma}
\newtheorem{thm}[theorem]{Theorem}
\newtheorem{proposition}[theorem]{Proposition}
\renewcommand{\[}{\begin{equation}}
\renewcommand{\]}{\end{equation}}
\def\<#1\>{\begin{align}#1\end{align}}
\newcommand{\RegisterPairedDelimiter}[3][1]{
\ifnum#1=1 \newcommand{#2}[2][-1]{\ensuremath{\mathinner{%
\ifnum##1=-1 #3*{##2}\fi%
\ifnum##1=0 #3{##2}\fi%
\ifnum##1=1 #3[\big]{##2}\fi%
\ifnum##1=2 #3[\Big]{##2}\fi%
\ifnum##1=3 #3[\bigg]{##2}\fi%
\ifnum##1=4 #3[\Bigg]{##2}\fi%
}}}\fi%
\ifnum#1=2 \newcommand{#2}[3][-1]{\ensuremath{\mathinner{%
\ifnum##1=-1 #3*{##2}{##3}\fi%
\ifnum##1=0 #3{##2}{##3}\fi%
\ifnum##1=1 #3[\big]{##2}{##3}\fi%
\ifnum##1=2 #3[\Big]{##2}{##3}\fi%
\ifnum##1=3 #3[\bigg]{##2}{##3}\fi%
\ifnum##1=4 #3[\Bigg]{##2}{##3}\fi%
}}}\fi%
}
\DeclarePairedDelimiter\deldelim{(}{)}
\RegisterPairedDelimiter{\del}{\deldelim}
\DeclarePairedDelimiter\sbrdelim{[}{]}
\RegisterPairedDelimiter{\sbr}{\sbrdelim}
\DeclarePairedDelimiter\cbrdelim{\{}{\}}
\RegisterPairedDelimiter{\cbr}{\cbrdelim}
\DeclarePairedDelimiter\absdelim{|}{|}
\RegisterPairedDelimiter{\abs}{\absdelim}
\DeclarePairedDelimiter\normdelim{\lVert}{\rVert}
\RegisterPairedDelimiter{\norm}{\normdelim}
\DeclarePairedDelimiter\innerprodonedelim{\langle}{\rangle}
\RegisterPairedDelimiter{\innerprodone}{\innerprodonedelim}
\DeclarePairedDelimiterX\innerprodtwodelim[3]{\langle}{\rangle}{#1,#2}
\newcommand{\innerprod}[2][-1]{\@ifnextchar\bgroup{\expandafter\innerprodtwo[#1]{#2}}{\innerprodone[#1]{#2}}}
\DeclarePairedDelimiterX\dualproddelim[3]{\langle}{\rangle}{#1\;\delimsize|\;\mathopen{}#2}
\DeclareMathOperator{\E}{\mathbb{E}} 
\newcommand{\I}[1][]{\operatorname{I}\ifx\\#1\\\else_{\{#1\}}\fi} 
\newcommand{\R}{\mathbb{R}} 
\newcommand{\iid}[1][]{\overset{\textrm{iid}}{\sim}\ifx\\#1\\\else\operatorname{#1}\fi} 
\newcommand{\given}{\mid} 
\let\textu\u 
\renewcommand{\u}{\relax\ifmmode\cup\else\expandafter\textu\fi} 
\let\texthat\^ 
\renewcommand{\^}{\relax\ifmmode\cap\else\expandafter\texthat\fi} 
\newcommand{\goesto}{\rightarrow} 
\let\textv\v 
\renewcommand{\v}{\relax\ifmmode\expandafter\boldsymbol\else\expandafter\textv\fi} 
\newcommand{\m}{\expandafter\mathbf} 
\newcommand{\f}{\expandafter\operatorname} 
\newcommand{\od}[3][]{\frac{\dif{^{#1}}#2}{\dif{#3^{#1}}}} 
\newcommand{\pd}[3][]{\frac{\partial{^{#1}}#2}{\partial{#3^{#1}}}} 
\newcommand\grad\nabla
\DeclareMathOperator{\dif}{d \!} 
\let\textd\d 
\renewcommand{\d}{\relax\ifmmode\dif\else\expandafter\textd\fi} 
\let\textc\c 
\renewcommand{\c}{\relax\ifmmode\expandafter\mathcal\else\expandafter\textc\fi} 
\newcommand{\distr}[1][]{\sim\ifx\\#1\\\else\operatorname{#1}\fi} 
\newcommand{\todo}[1]{\textcolor{red}{\ifmmode\text{TODO: #1}\else{TODO: #1}\fi}}
\let\textcite\citet
\newcommand{\printbibliography}{\bibliography{AVTcitations,local}}
\address{$^{1}$ \quad Department of Mathematics, Imperial College London, London, UK; a.terenin17@imperial.ac.uk\\
$^{2}$ \quad Department of Physics, University of California, Santa Cruz, CA, USA; dpthorngren@ucsc.edu}
\abstract{Recently, a class of stochastic processes known as piecewise deterministic Markov processes has been used to define continuous-time Markov chain Monte Carlo algorithms with a number of attractive properties, including compatibility with stochastic gradients like those typically found in optimization and variational inference, and high efficiency on certain big data problems.
Not many processes in this class that are capable of targeting arbitrary invariant distributions are currently known, and within one subclass all previously known processes utilize linear transition functions.
In this work, we derive a process whose transition function is nonlinear through solving its Fokker-Planck equation in hyperspherical coordinates.
We explore its behavior on Gaussian targets, as well as a Bayesian logistic regression model with synthetic data.
We discuss implications to both the theory of piecewise deterministic Markov processes, and to Bayesian statisticians as well as physicists seeking to use them for simulation-based computation.
}
\begin{document}

\section{Introduction} \label{sec:intro}

The Bayesian statistical paradigm possesses many desirable properties, including the ability to quantify uncertainty about a set of estimated parameters.
However, using it entails the computation of posterior probability distributions -- this tends to be expensive, because these are inherently complicated, and depend on the data used to define them.
This is especially challenging in modern application areas such as natural language processing and analysis of internet data, which involve large data sets.
Creating algorithms that scale well with data size is a current area of research.

Very recently, a class algorithms called \emph{piecewise deterministic Markov processes} (PDMP)\cite{fearnhead16} has been proposed with some surprising properties making them attractive to this task.
In particular, PDMPs -- like \emph{stochastic gradient descent} (SGD) \cite{robbins51} and \emph{stochastic variational inference} (SVI) \cite{hoffman13} -- can be used for simulation-based inference under an exchangeable model without performing any full-data computations.
Unlike SGD or SVI, however, PDMPs target the correct posterior distribution $\pi$ and do not entail the use of point estimates or distributional approximations.
Furthermore, subject to a one-off calculation, their computational cost can be $\c{O}(1)$ with respect to data size \cite{pollock16, bierkens16}.

These advantages have lead to increased interest in studying PDMPs -- particularly since, at present, only a small number of PDMPs invariant with respect to arbitrary target distributions are known.
These include the bouncy particle sampler \cite{peters12, bouchardcote15}, zig-zag \cite{bierkens16}, scalable Langevin exact \cite{pollock16}, and a few other variants whose comparative behavior is not yet well-understood.
In particular, all known PDMPs with constant deterministic dynamics also utilize linear transition functions.
In this work, we define a PDMP whose transition function is nonlinear and develop methods for computing with it.

Our contribution is purely theoretical. 
The process is derived in Section \ref{sec:pdmps}.
We present empirical evaluation in Sections \ref{sec:gauss} and \ref{sec:logit}. 
We discuss the implications of these results in Section \ref{sec:disc}.

\section{Piecewise Deterministic Markov Processes} \label{sec:pdmps}

Piecewise deterministic Markov processes are a class of stochastic processes first introduced by \textcite{davis84} and described in the Markov chain Monte Carlo (MCMC) context by \textcite{fearnhead16}.
All such algorithms evolve in part deterministically, and in part according to a Markov jump process.
These are fully described by three components.
\begin{enumerate}
\item[(1)] Deterministic Dynamics: a function $\Phi$ that determines the process' behavior between jumps, typically specified through a system of differential equations.
\item[(2)] Switching Rate: a function $\lambda$ that specifies the intensity of the jumps at each state.
\item[(3)] Transition Distribution: a probability measure $\mathbb{Q}$ that specifies what states the process jumps to.
\end{enumerate}

We refer the reader to \textcite{fearnhead16} for a detailed introduction.
One appealing property of PDMPs is that they can be simulated \emph{exactly} -- meaning with no discretization error -- through the use of techniques such as Poisson thinning.
This is because the switches evolve according to a nonhomogeneous Poisson process, which can be simulated by proposing switches from a Poisson process with greater intensity, and accepting or rejecting the proposals.
Other techniques, such as Poisson inversion, can also be considered \cite{fearnhead16}.
In between switches, the algorithm's behavior is deterministic, and can be calculated exactly provided $\Phi$ is tractable.

Another appealing property is that most PDMPs do not require the evaluation of the target distribution $\pi(\v{x})$ directly, and only depend on it through functions of $\nabla \ln \pi(\v{x})$. 
This term is amenable to unbiased estimation: $\nabla \ln \pi(\v{x})$ can be replaced by its expectation $\E[\grad\ln\pi(\v x)]$ with respect to some auxiliary variable, which can then be replaced with an unbiased estimate -- all without violating stationarity with respect to $\pi$.
Indeed \textcite{bouchardcote15}, \textcite{pollock16}, \textcite{bierkens16}, and \textcite{vanetti17} use PDMPs with stochastic gradients to obtain state-of-the-art performance on certain big data problems.
Moreover, \textcite{bierkens16} and \textcite{pollock16} have shown that, through introducing a control variate that can be computed using a one-off $\c{O}(N)$ calculation, the computational cost of certain such algorithms can be $\c{O}(1)$.
This makes PDMPs appealing in a big data setting.

\section{Piecewise Linear Markov Processes} \label{sec:plmps}

We now proceed to describe the class of PDMPs studied in this work, which includes the Bouncy Particle Sampler and Pure Reflection process as special cases.

\begin{definition} \label{def:plmp}
Consider a PDMP in the sense of \textcite{fearnhead16}.
Let $\pi(\v{v},\v{x}) = \pi(\v{v})\pi(\v{x})$ with $\pi(\v{v})$ standard multivariate Gaussian and $\pi(\v{x})$ the target distribution of interest.
Let $\mathbb{Q}$ to be the Dirac measure centered at $(\v{x}, F_{\v{x}}(\v{v}))$ for some function $F_{\v{x}}$, called the transition function.
Define the following.
\begin{enumerate}
\item Deterministic Dynamics: \vspace*{-4ex}

\<
\od{\v{x}}{t} &= v
&
\od{\v{v}}{t} &= 0
.
\>
\vspace*{-3ex}

\item Switching Rate:
\[
\lambda(\v{x}, \v{v}) = \max\{0,-\v{v} \cdot \nabla \ln \pi (\v{x}) ] \}
.
\]

\item Transition Function: \vspace*{-4.5ex}

\<
&F_{\v{x}}(\v{v})
&
&\text{s.t.} 
&
F^{-1}_{\v{x}}(\v{v}) &\text{ exists}
&
&\text{and}
&
&\pi(\v{x},\v{v}) \text{ is stationary.}
\>
\vspace*{-4.5ex}

\end{enumerate}
Call a PDMP satisfying these conditions a \emph{Piecewise Linear Markov Process}.
\end{definition}

Both the Pure Reflection process of \textcite{fearnhead16} and Bouncy Particle Sampler of \textcite{bouchardcote15} are examples within this subclass of PDMPs, with
\<
F_{\v{x}}(\v{v}) &= -\v{v}
&
&\text{and}
&
F_{\v{x}}(\v{v}) &= \v{v} - 2\frac{\v{v} \cdot \nabla \ln \pi (\v{x})}{||\nabla \ln \pi (\v{x})||^2}\nabla \ln \pi (\v{x})
\>
respectively.
In both cases, $F_{\v{x}}$ changes only the process' velocity $\v{v}$, and is linear -- the latter expression can be viewed as a reflection with respect to a hyperplane normal to $\nabla \ln \pi (\v{x})$.
In this work, we begin by asking the following question: does there exists a process within this class for which $F_{\v{x}}$ is nonlinear?

For such a process to exist, invariance must hold, which means that $\pi(\v{x}, \v{v})$ must be a zero of the Fokker-Planck Equation.
We proceed to derive this equation for processes given by Definition \ref{def:plmp}, which differ slightly from those considered by \textcite{fearnhead16}.

\begin{lemma} \label{lemma:fp}
The Fokker-Planck Equation for a Piecewise Linear Markov Process is given by
\[
\lambda(\v{x}, \v{v})\pi(\v{v}) - \lambda[\v{x}, F^{-1}_{\v{x}}(\v{v})] \pi[F^{-1}_{\v{x}}(\v{v}) \given \v{x}] \abs{\pd{F^{-1}_{\v{x}}(\v{v})}{\v{v}}} = -v \cdot [ \nabla \ln \pi (\v{x}) ] \, \pi(\v{v})
\]
where $\lambda(\v{x}, \v{v}) = \max\{0,-v \cdot \nabla \ln \pi (\v{x})\}$.
\end{lemma}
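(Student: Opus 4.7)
The plan is to derive the stationary Fokker-Planck equation from scratch by identifying the generator of the process, taking its $L^2$ adjoint, and setting $\partial_t \pi = 0$. Since the PDMP has two independent components (deterministic flow plus a Markov jump), I will handle each piece separately and then combine.

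First, I would write down the generator. The deterministic flow $\dot{\v{x}} = \v{v},\ \dot{\v{v}} = 0$ contributes a pure transport term $\v{v} \cdot \nabla_{\v{x}} f$. The jump mechanism, with rate $\lambda(\v{x},\v{v})$ and Dirac transition kernel concentrated at $(\v{x}, F_{\v{x}}(\v{v}))$, contributes $\lambda(\v{x},\v{v})\bigl[f(\v{x}, F_{\v{x}}(\v{v})) - f(\v{x},\v{v})\bigr]$. The full generator is the sum.

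Next, I would compute the adjoint of each piece against $\pi(\v{x},\v{v})\,d\v{x}\,d\v{v}$. The transport term, since the velocity field does not depend on $\v{x}$, has adjoint $-\v{v}\cdot\nabla_{\v{x}} \pi$. The adjoint of the jump operator requires a change of variables: substituting $\v{u} = F_{\v{x}}(\v{v})$ in the integral $\int \lambda(\v{x},\v{v}) f(\v{x}, F_{\v{x}}(\v{v}))\,\pi(\v{x},\v{v})\,d\v{v}$ introduces the Jacobian factor $\bigl|\partial F^{-1}_{\v{x}}(\v{v})/\partial \v{v}\bigr|$ and replaces the arguments of $\lambda$ and $\pi$ by $F^{-1}_{\v{x}}(\v{v})$. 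This is the step I expect to be the main obstacle: one has to be careful that $F^{-1}_{\v{x}}$ exists (which is in the definition), that the change of variables acts only on $\v{v}$ so the $\v{x}$ marginal is unaffected, and that the absolute value of the determinant is taken.

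Finally, I would set $\mathcal{L}^*\pi = 0$, substitute the product form $\pi(\v{x},\v{v}) = \pi(\v{x})\pi(\v{v})$, and divide through by $\pi(\v{x})$. In the transport term, $\v{v}\cdot\nabla_{\v{x}} \pi(\v{x})/\pi(\v{x}) = \v{v}\cdot\nabla \ln \pi(\v{x})$, producing the claimed right-hand side $-\v{v}\cdot[\nabla\ln\pi(\v{x})]\pi(\v{v})$; the outflow and inflow jump terms yield, respectively, $\lambda(\v{x},\v{v})\pi(\v{v})$ and the shifted term with the Jacobian factor. Rearranging puts the equation in the form stated in the lemma. The switching rate formula $\lambda(\v{x},\v{v}) = \max\{0, -\v{v}\cdot\nabla\ln\pi(\v{x})\}$ plays no role in the derivation itself and is just carried along; it becomes relevant only when one later tries to solve the equation for $F_{\v{x}}$.
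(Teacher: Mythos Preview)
Your proposal is correct and follows essentially the same route as the paper: write down the Davis generator as transport plus jump, compute the formal adjoint term by term (with the change of variables $\v{u}=F_{\v{x}}(\v{v})$ producing the Jacobian factor in the jump inflow), set $\mathscr{A}^*\pi=0$, and then specialize to $\pi(\v{x},\v{v})=\pi(\v{x})\pi(\v{v})$ so that the transport piece becomes $-\v{v}\cdot\nabla\ln\pi(\v{x})\,\pi(\v{v})$ after dividing through by $\pi(\v{x})$. The paper carries out the computation in the aggregated variable $\v{z}=(\v{x},\v{v})$ and only specializes at the end, whereas you work in $(\v{x},\v{v})$ throughout, but this is purely cosmetic; the substance---in particular the identification of the Jacobian term via invertibility of $F_{\v{x}}$---is identical.
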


\begin{proof}
To simplify notation, we first consider general PDMPs -- within this lemma let $\v{z} = (\v{v},\v{x})$, $F(\v{z}) = (\v{x}, F_{\v{x}}(\v{v}))$, let $\Phi$ be the deterministic dynamics, and let $\mathbb{Q}$ be a Dirac measure centered at $F(\v{z})$.
We derive Fokker-Planck Equation from the infinitesimal generator
\[
\mathscr{A} f(\v{z}) = \Phi(\v{z}) \cdot \nabla f(\v{z}) + \lambda(\v{z}) \int_\Omega f(\v{z}') \dif \mathbb{Q}(\v{z}') - \lambda(\v{z}) f(\v{z})
\]
given by \textcite{davis84}, by finding its formal adjoint $\mathscr{A}^*$ satisfying
\[
\int_\Omega \pi(\v{z}) \mathscr{A} f(\v{z}) \dif\v{z} = \int_\Omega f(\v{z}) \mathscr{A}^* \pi(\v{z}) \dif\v{z}
.
\]
We proceed as follows. 
First, note that by linearity we may find the formal adjoint of $\mathscr{A}$ component-wise.
It is shown in \textcite{fearnhead16} that the components
\<
&\hspace*{2ex}\Phi(\v{z}) \cdot \nabla f(\v{z})\hspace*{2ex}
&
&\text{and}
&
& - \lambda(\v{z}) f(\v{z})
\>
map to
\<
&-\sum_{i=1}^{2p} \pd{\Phi_i(\v{z})}{z_i} \pi(\v{z})
&
\text{and}&
&
& - \lambda(\v{z}) \pi(\v{z})
\>
respectively. For the remaining component, we can write
\<
\int_\Omega \pi(\v{z}) \lambda(\v{z}) \int_\Omega f(\v{z}') \dif \mathbb{Q}(\v{z}') \dif\v{z} &= \int_\Omega \pi(\v{z}) \lambda(\v{z}) f[F(\v{z})] \dif\v{z}
\nonumber
\\
&= \int_\Omega f(\v{\tilde{z}}) \pi[F^{-1}(\v{\tilde{z}})] \lambda[F^{-1}(\v{\tilde{z}})]  \abs{\frac{\partial F^{-1}(\v{\tilde{z}})}{\partial\v{\tilde{z}}}} \dif\v{\tilde{z}}
\>
where the change of variables is justified because $F_{\v{x}}$ is assumed invertible everywhere, except possibly a set of measure zero, in which case we may divide $\Omega$ accordingly and invert $F_{\v{x}}$ piecewise.
Note the presence of a Jacobian term that does not explicitly appear in the derivation of \textcite{fearnhead16} because they consider a slightly different case.
Thus we have
\[
\mathscr{A}^* \pi(\v{z}) = -\sum_{i=1}^{2p}  \pd{\Phi_i(\v{z})}{z_i} \pi(\v{z}) + \pi[F^{-1}(\v{z})] \lambda[F^{-1}(\v{z})]  \abs{\frac{\partial F^{-1}(\v{z})}{\partial\v{\tilde{z}}}} - \lambda(\v{z}) \pi(\v{z})
.
\]
For $\pi(\v{z})$ to be invariant, we must have $\mathscr{A}^* \pi(\v{z}) = 0$.
Consider now our case, where $F(\v{x},\v{v}) = (\v{x}, F_{\v{x}}(\v{v}))$, and $\pi(\v{z}) = \pi(\v{v})\pi(\v{x})$.
The expression then simplifies to the desired result.
\end{proof}

Both the Pure Reflection Process of \textcite{fearnhead16} and the Bouncy Particle Sampler of \textcite{bouchardcote15} are processes within Definition \ref{def:plmp}.
Observe that they both solve the Fokker-Planck Equation by letting
\<
||F^{-1}_{\v{x}}(\v{v})|| &= ||\v{v}||
&
-F^{-1}_{\v{x}}(\v{v}) \cdot -\nabla\ln\pi(\v{x}) &= \v{v} \cdot -\nabla\ln\pi(\v{x})
&
\abs{\pd{F^{-1}_{\v{x}}(\v{v})}{\v{v}}} &= 1
.
\>
Both of these solutions are magnitude-preserving.
This motivates us to further ask: are there solutions that are not magnitude-preserving?
We now proceed to find such a solution.

\begin{thm} \label{prop:solution}
Let $r = ||\v{v}||$ and $\theta$ be the angle between $\nabla \ln \pi(\v{x})$ and $\v{v}$ along the hyperplane spanned by both vectors.
Consider a transition function $F_{\v{x}}$ which maps $\v{v}$ to another vector on that hyperplane, which is fully determined by the coordinates $r', \theta'$.
Suppose that $r'$ is only a function of $\theta$ and $\theta'$ is only a function of $r$.
Then, letting $k$ be a positive constant and $p$ be the dimension of $\v{v}$, we have that for every $r$, if we take $\theta'$ to be the solution of the differential equation
\[
\frac{\dif \theta'(r)}{\dif r} = \frac{k r^p \exp\cbr{\frac{r^2}{-2}}}{\cos\sbr{\theta'(r)} \sin^{p-2}\sbr{\theta'(r)}} 
\]
subject to the boundary conditions $\theta'(0) = 0$ and $\theta'(\infty) = \pi/2$ which fully determine $k$, and if we take $r'(\theta)$ to be the above solution's inverse, the resulting PDMP is $\pi$-invariant.
\end{thm}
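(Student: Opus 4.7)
The plan is to plug the ansatz directly into the Fokker--Planck equation from Lemma~\ref{lemma:fp} and use separation of variables in hyperspherical coordinates. Because $\pi(\v v)$ is rotationally symmetric, every term in that equation depends only on $r=\norm{\v v}$ and $\theta$, with $\v v\cdot\nabla\ln\pi(\v x)=rg\cos\theta$ where $g=\norm{\nabla\ln\pi(\v x)}$. Restricting attention to the half-space on which $\lambda(\v x,\v v)=0$---the complementary half follows from the involution built into the swap---Lemma~\ref{lemma:fp} collapses to the pointwise identity
\[
\lambda\sbr{\v x,F^{-1}_{\v x}(\v v)}\,\pi\sbr{F^{-1}_{\v x}(\v v)}\,\abs{\pd{F^{-1}_{\v x}(\v v)}{\v v}}=rg\cos\theta\,\pi(\v v).
\]

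Next I would compute the Jacobian under the ansatz $r_{-}=\Theta^{-1}(\theta)$, $\theta_{-}=\Theta(r)$ (implicitly reflected so that $\theta_{-}$ lies in the jumping cone), where $\Theta$ denotes the function $\theta'(\cdot)$. This map acts antidiagonally on the two-dimensional $(r,\theta)$ slice and as the identity on the remaining $p-2$ spherical angles, so its spherical determinant equals $\abs{\Theta'(r)/\Theta'(r_{-})}$; converting to Cartesian multiplies by the ratio $r_{-}^{p-1}\sin^{p-2}\theta_{-}/(r^{p-1}\sin^{p-2}\theta)$ of spherical volume forms. Substituting $\pi\sbr{F^{-1}_{\v x}(\v v)}=(2\pi)^{-p/2}e^{-r_{-}^{2}/2}$ and $\lambda\sbr{\v x,F^{-1}_{\v x}(\v v)}=r_{-}g\cos\Theta(r)$ and cancelling the Gaussian normalisation, the identity rearranges into a product of a pure $r$-factor with a pure $\theta$-factor, the $\theta$-dependence on the left entering only through $r_{-}=\Theta^{-1}(\theta)$.

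Separation of variables then forces each factor to equal a common positive constant, call it $k$. Equating the $r$-factor to $k$ is precisely the ODE in the statement, while the $\theta$-factor equation is automatic once one uses $\Theta(r_{-})=\theta$. To pin down $k$ I would separate the ODE as $\cos\Theta\,\sin^{p-2}\Theta\,\dif\Theta = k\,r^{p}\,e^{-r^{2}/2}\,\dif r$ and integrate from the boundary values $(\Theta,r)=(0,0)$ to $(\pi/2,\infty)$; both resulting definite integrals are finite and positive for $p\geq 2$, so $k$ is uniquely determined, as claimed.

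The step I expect to be the main obstacle is the Jacobian bookkeeping: converting cleanly between Cartesian and spherical volume forms, and making explicit the reflection needed so that the swap $(r,\theta)\leftrightarrow(r',\theta')$ lifts to a genuine bijection of $\R^{p}$---which is needed for Lemma~\ref{lemma:fp} to apply directly, without piecewise subdivision of the velocity space.
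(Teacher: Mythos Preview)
Your proposal is correct and follows essentially the same route as the paper: reduce Lemma~\ref{lemma:fp} to the nontrivial half-space, pass to hyperspherical coordinates, compute the Cartesian Jacobian as the product of the antidiagonal $(r,\theta)$-block determinant with the ratio of spherical volume elements, and separate variables to obtain the stated ODE with $k$ fixed by the boundary conditions. The only cosmetic differences are that the paper writes the $(r,\theta)$ Jacobian as $\abs{\partial r'/\partial\theta\cdot\partial\theta'/\partial r}$ rather than your equivalent $\abs{\Theta'(r)/\Theta'(r_-)}$, and it \emph{derives} the involution $F_{\v x}=F_{\v x}^{-1}$ by observing that the two separated ODEs are reciprocals, whereas you invoke it up front from the hypothesis $r'=\Theta^{-1}$.
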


\begin{proof}
By Lemma \ref{lemma:fp}, the Fokker-Planck Equation is
\[
\max\{0, \v{v} \cdot \nabla \ln \pi(\v{x})\}\pi(\v{v}) - \max\{0, F^{-1}_{\v{x}}(\v{v}) \cdot \nabla \ln \pi(\v{x})\} \pi[F^{-1}_{\v{x}}(\v{v})] \abs{\pd{F^{-1}_{\v{x}}(\v{v})}{\v{v}}} = \v{v} \cdot \nabla \ln \pi(\v{x}) \ \pi(\v{v})
\]
which for $\v{v} \cdot \nabla \ln \pi(\v{x}) > 0$ is always true provided $F_{\v{x}}(\v{v})^{-1} \cdot \nabla \ln \pi(\v{x}) < 0$, which we henceforth assume.
Consider $\v{v} \cdot \nabla \ln \pi(\v{x}) < 0$, and suppose $F_{\v{x}}(\v{v})^{-1} \cdot \nabla \ln \pi(\v{x}) > 0$.
Substituting in $\pi(\v{v}) \propto \exp\cbr{\frac{||\v{v}||^2}{-2}}$, we can write
\[
- F^{-1}_{\v{x}}(\v{v}) \cdot \nabla \ln \pi(\v{x}) \, \exp\cbr{\frac{||F^{-1}_{\v{x}}(\v{v})||^2}{-2}} \abs{\pd{F^{-1}_{\v{x}}(\v{v})}{\v{v}}} = \v{v} \cdot \nabla \ln \pi(\v{x}) \, \exp\cbr{\frac{||\v{v}||^2}{-2}}
.
\]
Now, transform to hyperspherical coordinates coordinates, by letting 
\<
v_1 = r &\cos(\theta) 
&
v_2 = r \sin(&\theta) \cos(\phi_1)
\nonumber
\\
&\vdots
&
&\vdots
\nonumber
\\
v_{p-1} = r \sin(&\theta) \prod_{i=1}^{p-2} \sin(\phi_i) \cos(\phi_{p-1})
&
v_p = r \sin(&\theta) \prod_{i=1}^{p-2} \sin(\phi_i)
\>
where $\theta$ is the angle on the hyperplane spanned by $\v{v}$ and $\nabla \ln \pi(\v{x})$, and $\phi_i$ are angles on an arbitrary set of hyperplanes orthogonal to $\v{v}$ and $\nabla \ln \pi(\v{x})$. 
For this transformation, we have the identities
\<
||\v{v}|| &= r
&
\frac{\v{v} \cdot \nabla \ln \pi(\v{x})}{||\v{v}||\,||\nabla \ln \pi(\v{x})||} = \cos(\theta)
\>
and letting $r', \theta', \v{\phi}'$ be the coordinates under $F_{\v{x}}^{-1}$, i.e. functions of $r, \theta, \v{\phi}$, the Fokker-Planck equation becomes
\[
- \cos(\theta') r' \exp\cbr{\frac{{r'}^2}{-2}} \abs{\pd{F^{-1}_{\v{x}}(\v{v})}{\v{v}}} = \cos(\theta) r \exp\cbr{\frac{r^2}{-2}}
.
\]
We can decompose the Jacobian into
\[
\abs{\pd{F^{-1}_{\v{x}}(\v{v})}{\v{v}}} = \abs{\pd{F^{-1}_{\v{x}}(\v{v})}{(r',\theta',\v{\phi}')}} \abs{\pd{(r',\theta',\v{\phi}')}{(r,\theta,\v{\phi})}} \abs{\pd{(r,\theta,\v{\phi})}{(\v{v})}}
\]
which, since the Jacobian for hyperspherical coordinates is 
\[
\abs{\frac{\partial(r,\theta,\v{\phi})}{\partial(\v{v})}} = \sbr{r^{p-1} \sin^{p-2}(\theta) \prod_{i=1}^{p-2} \sin^{p-1-i}(\phi_i)}^{-1}
\]
yields the system
\<
- \cos(\theta') &\sin^{p-2}(\theta') \sbr{\prod_{i=1}^{p-2} \sin^{p-1-i}(\phi_i')} {r'}^p \exp\cbr{\frac{r'(r, \theta)^2}{-2}} \abs{\frac{\partial(r',\theta', \v{\phi}')}{\partial(r,\theta,\v{\phi})}} =
\nonumber
\\
&= \cos(\theta) \sin^{p-2}(\theta) \sbr{\prod_{i=1}^{p-2} \sin^{p-1-i}(\phi_i)}  r^p \exp\cbr{\frac{r^2}{-2}}
.
\>
Now, suppose that we are interested in solutions where $\v{\phi}' = \v{\phi}$, $\theta'$ is only a function of $r$ and $r'$ is only a function of $\theta$.
The Jacobian is just
\[
\abs{\frac{\partial(r',\theta', \v{\phi}')}{\partial(r,\theta, \v{\phi})}}
=
\begin{vmatrix}
\m{I}_{p-1} & \v{0} & \v{0}
\\[1ex]
\v{0} & \frac{\partial r'}{\partial r} & \frac{\partial r'}{\partial \theta}
\\[1ex]
\v{0} & \frac{\partial \theta'}{\partial r} & \frac{\partial \theta'}{\partial \theta}
\end{vmatrix}
=
\begin{vmatrix}
0 & \frac{\partial r'}{\partial \theta}
\\[1ex]
\frac{\partial \theta'}{\partial r} & 0
\end{vmatrix}
=
\abs{ \frac{\partial r'}{\partial \theta} \frac{\partial \theta'}{\partial r} }
.
\]
Under these assumptions, the Fokker-Planck Equation becomes
\[
-\cos\sbr{\theta'(r)} \sin^{p-2}\sbr{\theta'(r)} \, r'(\theta)^p \exp\cbr{\frac{r'(\theta)^2}{-2}} \abs{ \frac{\partial r'}{\partial \theta} \frac{\partial \theta'}{\partial r} } = \cos(\theta) \sin^{p-2}(\theta) \, r^p \exp\cbr{\frac{r^2}{-2}}
\]
which we can multiply on both sides by an arbitrary constant $k$, then factorize into the system
\<
-\cos\sbr{\theta'(r)} \sin^{p-2}\sbr{\theta'(r)} \abs{ \frac{\dif \theta'(r)}{\dif r} } &= k \, r^p \exp\cbr{\frac{r^2}{-2}}
\nonumber
\\
k\,r'(\theta)^p \exp\cbr{\frac{r'(\theta)^2}{-2}} \abs{ \frac{\dif r'(\theta)}{\dif \theta} } &= \cos(\theta) \sin^{p-2}(\theta) 
\>
and rewrite as
\<
\frac{\dif \theta'(r)}{\dif r} &= \frac{k r^p \exp\cbr{\frac{r^2}{-2}}}{\cos\sbr{\theta'(r)} \sin^{p-2}\sbr{\theta'(r)}} 
&
\frac{\dif r'(\theta)}{\dif \theta} &= \frac{\cos(\theta) \sin^{p-2}(\theta)}{k\,r'(\theta)^p \exp\cbr{\frac{r'(\theta)^2}{-2}}}
.
\>
Notice that these differential equations are reciprocals of one another: therefore, subject to identical initial conditions, $\theta'(r)$ and $r'(\theta)$ are inverse functions.
We have thus shown that $F_{\v{x}} = F_{\v{x}}^{-1}$, and therefore need not consider the inversion.
We impose the boundary conditions
\<
\theta'(0) &= 0
&
\theta'(\infty) &= \pi/2
\>
under which the above differential equations can be solved analytically.
Since these solutions are monotonic, the result follows.
\end{proof}

Though the above differential equations can be solved analytically, computation using them is intractable because they are not numerically stable due to the presence of large powers.
Indeed, for moderate $p$, to satisfy boundary conditions the constant $k$ needs to be taken closer to zero than the smallest positive number available in double precision arithmetic.
As a result, we cannot proceed directly.
We instead consider the differential equation's asymptotic form for large $p$ -- this introduces some approximation error that vanishes in high dimension.

\begin{proposition} \label{cor:asym}
For large $p$, we have
\[
\theta'(r) \approx \frac{\pi}{2} - \frac{1}{2} \sqrt{\frac{-8}{p-2} \ln\Phi\sbr{\sqrt{2}\del{r - \sqrt{p}}} }
\]
where $\Phi$ is the CDF of a unit Gaussian, in the sense that $\theta'(r)$ is the solution of a differential equation whose right-hand side is the pointwise limit as $p \goesto \infty$ of the equation in Proposition \ref{prop:solution}.
\end{proposition}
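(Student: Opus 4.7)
The plan is to perform a Laplace-style asymptotic analysis of both large-$p$ factors appearing in the ODE of Theorem~\ref{prop:solution}. The numerator $r^p\exp\cbr{-r^2/2}$ is sharply peaked at $r=\sqrt{p}$, and the denominator $\cos\sbr{\theta'(r)}\sin^{p-2}\sbr{\theta'(r)}$ is sharply peaked at $\theta'=\pi/2$. I would first substitute $\psi = \pi/2 - \theta'$, which sends $\cos(\theta')\mapsto\sin(\psi)$ and $\sin(\theta')\mapsto\cos(\psi)$, converting the denominator to $\sin(\psi)\cos^{p-2}(\psi)$ with mass concentrated at $\psi=0$. I would then replace each factor by its leading large-$p$ envelope: on the $\psi$ side $\sin(\psi)\approx \psi$ and $\cos^{p-2}(\psi)\approx \exp\cbr{-(p-2)\psi^2/2}$, and on the $r$ side a second-order expansion of $p\ln r - r^2/2$ about the maximum $r=\sqrt{p}$ gives $r^p\exp\cbr{-r^2/2}\sim p^{p/2}e^{-p/2}\exp\cbr{-(r-\sqrt{p})^2}$.

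Absorbing the Laplace prefactor into $\tilde k = k\,p^{p/2}e^{-p/2}$, the reduced equation is separable:
\[
-\psi\exp\cbr{-(p-2)\psi^2/2}\,\frac{\dif\psi}{\dif r} = \tilde k \exp\cbr{-(r-\sqrt{p})^2}.
\]
Integrating from $r$ to $\infty$ and using $\psi(\infty)=0$, the substitution $u=(p-2)\psi^2/2$ evaluates the left side to $\sbr{1-\exp\cbr{-(p-2)\psi(r)^2/2}}/(p-2)$, while $s=\sqrt{2}(r'-\sqrt{p})$ turns the right side into $\tilde k\sqrt{\pi}\del{1-\Phi\sbr{\sqrt{2}\del{r-\sqrt{p}}}}$. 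The remaining boundary condition $\psi(0)=\pi/2$ then determines $\tilde k$: for large $p$ both $1-\exp\cbr{-(p-2)\pi^2/8}$ and $1-\Phi\sbr{-\sqrt{2p}}$ tend to $1$, so $(p-2)\tilde k\sqrt{\pi}\to 1$. Substituting back yields $\exp\cbr{-(p-2)\psi(r)^2/2}\approx \Phi\sbr{\sqrt{2}\del{r-\sqrt{p}}}$, and solving for $\psi$ and setting $\theta'=\pi/2-\psi$ produces the stated formula.

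The principal obstacle I anticipate is giving precise meaning to the proposition's phrase ``pointwise limit as $p\goesto\infty$'': literally interpreted, neither $r^p\exp\cbr{-r^2/2}$ nor $\sin^{p-2}(\theta')$ has a pointwise limit without rescaling, and the constant $k$ from Theorem~\ref{prop:solution} is itself $p$-dependent through its own boundary conditions. I would address this by reading the statement as matched dominant-balance asymptotics, in which each factor is replaced by its Laplace envelope and the boundary conditions are re-imposed on the reduced equation to re-pin $\tilde k$. Verifying that the sub-leading Taylor corrections---the $t^3/\sqrt{p}$ term in $p\ln r-r^2/2$ near $r=\sqrt{p}$ and the $(p-2)\psi^4$ term in $\cos^{p-2}(\psi)$ near $\psi=0$---are genuinely negligible on the $O(1)$-scale neighborhoods that carry the appreciable mass of each integrand is then the main technical task.
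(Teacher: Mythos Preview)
Your proposal is correct and follows essentially the same route as the paper: both replace $r^p e^{-r^2/2}$ and $\cos(\theta')\sin^{p-2}(\theta')$ by their Gaussian Laplace envelopes centred at $r=\sqrt{p}$ and $\theta'=\pi/2$, obtain the same separable ODE, integrate, and fix the constant via the boundary conditions $\theta'(0)=0$, $\theta'(\infty)=\pi/2$. Your definite integration from $r$ to $\infty$ is a minor procedural variant of the paper's general-solution-plus-matching, and your caveat about the phrase ``pointwise limit'' is well taken---the paper's limit statements are really Laplace-type dominant-balance approximations rather than literal pointwise limits.
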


\begin{proof}
It is a standard result that
\[
\lim_{p\goesto\infty} \abs{ \frac{2^{1 - (p+1)/2}\sqrt{\pi}}{\Gamma\sbr{(p+1)/2}} r^p \exp\cbr{\frac{r^2}{-2}} - \exp\cbr{-(r - \sqrt{p})^2} } = 0
\]
for all $r \in \R^+$, as the former is the density of a $\raisebox{0.4ex}{$\chi$}$ distribution, and that
\[
\lim_{p\goesto\infty} \abs{ \cos(\theta) \sin^p(\theta) - \del{\theta - \frac{\pi}{2}} \exp\cbr{\frac{(\theta - \pi/2)^2}{-2/p}} } = 0
\]
for $\theta \in [0,\pi/2]$.
Therefore, the limiting form for our differential equation is
\[
\frac{\dif \theta'(r)}{\dif r} = \frac{k' \exp\cbr{-(r - \sqrt{p})^2}}{\del{\theta'(r) - \frac{\pi}{2}} \exp\cbr{\frac{(\theta'(r) - \pi/2)^2}{-2/(p-2)}}}
\]
for some constant $k'$, which has analytic solution
\[
\theta'(r) = \frac{\pi}{2} - \frac{1}{2} \sqrt{c_1 - \frac{8}{p-2} \ln\sbr{\pm 1 + c_2 \f{erf}\del{r - \sqrt{p}}} }
\]
for arbitrary constants $c_1, c_2$.
We must choose $c_1, c_2$ such that $F_{\v{x}}$ is invertible, and $r$ is positive.
If we set $\theta'(0) = 0$, $\theta'(\infty) = \pi/2$, we obtain that $\pm$ should be taken to be $+$ and
\<
c_2 &= \frac{\exp\cbr{\frac{\pi^2 (p-2)}{8}} - 1}{\exp\cbr{\frac{\pi^2 (p-2)}{8}} + 1} \approx 1
&
c_1 &= \frac{8\ln(1 + c_2)}{\pi^2 (p-2)} \approx \frac{8\ln(2)}{\pi^2 (p-2)}
\>
which yields the desired result.
\end{proof}

For $c_1$ and $c_2$ as above, the solution is strictly increasing and positive everywhere, except possibly on an interval near the origin.
From a practical perspective, this is not a concern, as the probability of landing in those states is exceedingly small and was never occurred in our simulations.
The inverse function $r'(\theta)$ is obtained numerically, which can easily be done as $\theta'(r)$ is one-dimensional.
This completes our derivation. 

\section{Example: Independent Gaussian Target} \label{sec:gauss}

To understand the algorithm's behavior, we implemented it for a standard multivariate Gaussian target and compared it against the bouncy particle sampler.
We examined three targets with dimension $p=10, 100, 1,\!000$.
All were implemented using Poisson thinning with identical velocity-dependent switching rate bound $5\,||\v{v}||$ which was never exceeded outside of burn-in.
Velocity was resampled according to a homogeneous Poisson process with intensity $0.2$.
Each algorithm was given a fixed computational budget consisting of $100,\!000$ gradient evaluations, and started from initial values of $(10,..,10)$ selected to be away from the target mode.
This implementation avoids using analytic properties of Gaussians to better mimic real-world scenarios.

Trace plots of the resulting chains can be seen in Figure \ref{fig:gauss}.
It can be seen that for $p=10$, both algorithms produce similar output.
For $p=100$, we find that the algorithm approximately converged to the correct mean and variance slightly faster than the bouncy particle sampler.
Neither algorithm performed effectively for $p=1,\!000$.

\begin{figure*}
\includegraphics{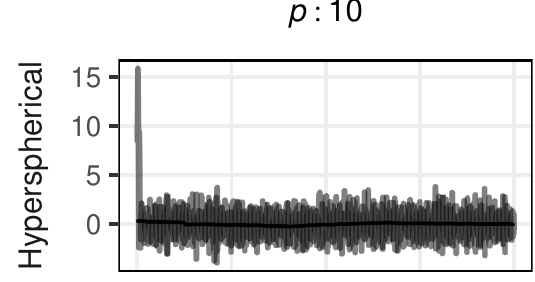}
\includegraphics{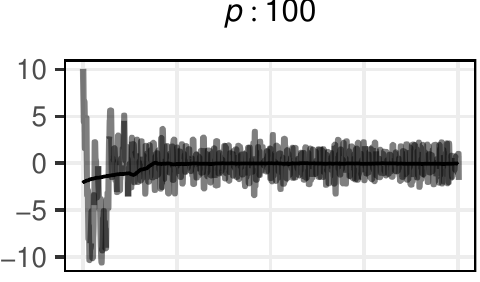}
\includegraphics{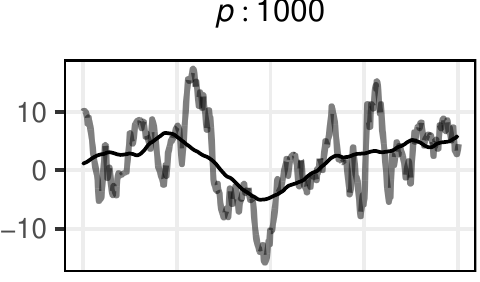}

\includegraphics{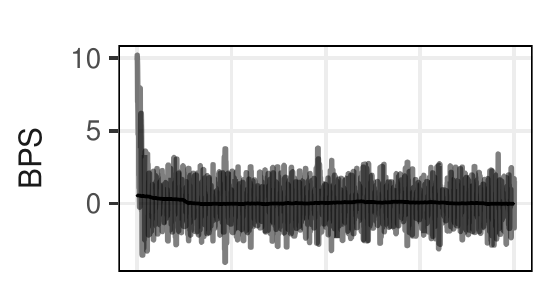}
\includegraphics{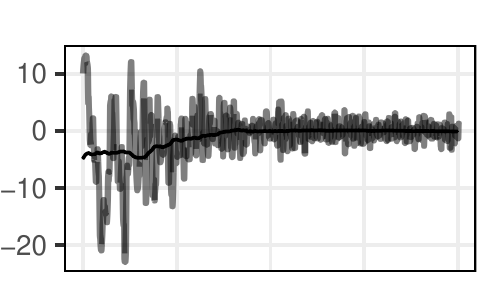}
\includegraphics{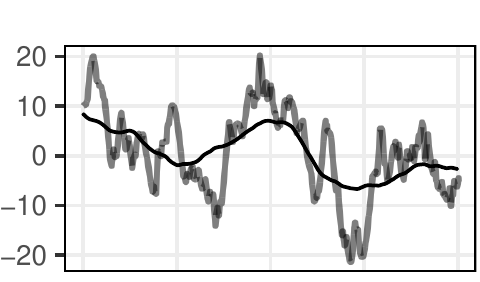}
\caption{Trace plots for the first coordinate of a multivariate Gaussian of dimension $p \in \{10, 100, 1000\}$. \emph{Hyperspherical} refers to the process derived in Section \ref{sec:plmps}, \emph{BPS} refers to the bouncy particle sampler.}
\label{fig:gauss}
\end{figure*}

\section{Example: Bayesian Logistic Regression} \label{sec:logit}

To examine the performance on a Bayesian model with known correct answer, we implemented the algorithm for a Bayesian Logistic Regression problem with synthetic data. 
Data was generated by taking
\<
\v{x}_i &\overset{\f{iid}}{\sim}\f{N}_p(\v{0},\m{I})
&
\v{\beta} &= (1.3,4,-1,1.6,5,-2, \v{0}_{p-6})^T
&
\v{y} &\sim\f{Ber}\sbr[1]{\Psi(\m{X} \v{\beta})} \,.
\>
where $\Psi$ is the logistic function.
We used the logistic regression model 
\<
\v{y} \given \v\beta &\sim\f{Ber}\sbr[1]{\Psi(\m{X} \v\beta)}
&
\v\beta &\sim\f{N}_p (\v{0},10^{-3} \, \m{I})
.
\>
We selected $N=1,\!000,\!000$ and $p=100$, and implemented the bouncy particle sampler as well as the algorithm of Section \ref{sec:plmps}.
Both utilized Poisson thinning with a constant switching rate bound $\hat\lambda = 5000\,||\v{v}||$, which was selected to be sufficiently large to ensure it was not exceeded more than $1\%$ of the time.
Velocity was resampled according to a homogeneous Poisson process with intensity $10$.

Computation was performed as follows.
First, a point estimate $\v{\hat\beta}$ of the posterior mode was obtained using stochastic gradient descent, consisting of $100,\!000$ steps, each with a batch size of $10$, using a total of $N$ data points.
Then, the data was used to precompute $\grad\ln\pi(\v{\hat\beta})$, which was then used to implement the control variate of \textcite{bierkens16} and \textcite{pollock16}.
Finally, sampling was performed, using $100,\!000$ stochastic gradient evaluations, each with a batch size of $10$, with the control variate used to reduce variance.

Results can be seen in Figure \ref{fig:logit}.
Given the extremely limited nature of our computational budget -- a total of $3N$ evaluations of $(\v{x}_i, y_i)$ pairs -- both algorithms obtained reasonable posterior samples, performing similarly.
We find this remarkable: standard MCMC methods such as Gibbs sampling \cite{casella92} and Hamiltonian Monte Carlo \cite{betancourt17a} would not generally produce useful output under such constraints.

\begin{figure*}
\includegraphics{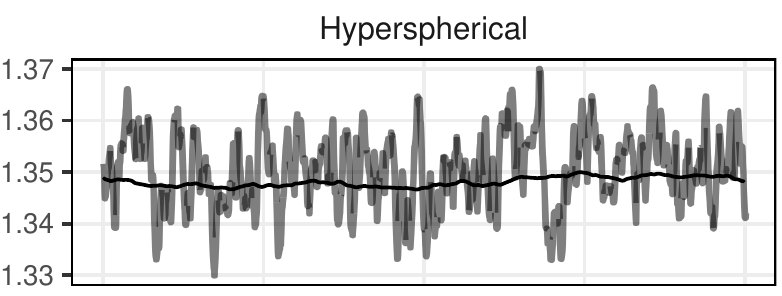}
\includegraphics{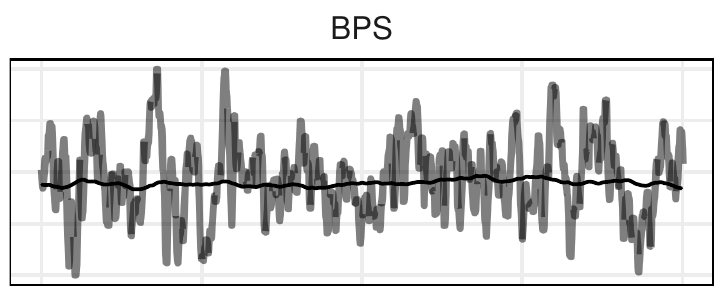}
\caption{Trace plots for the first coordinate of the logistic regression target distribution. \emph{Hyperspherical} refers to the process derived in Section \ref{sec:plmps}, \emph{BPS} refers to the bouncy particle sampler.}
\label{fig:logit}
\end{figure*}

\section{Discussion} \label{sec:disc}

The PDMP constructed in Section \ref{sec:plmps} performs slightly better than the bouncy particle sampler for Gaussian targets of moderate dimension. 
This is because its transition function is nonlinear and non-magnitude-preserving -- this helps the process avoid getting stuck in high-dimensional orbits by making it easier to move perpendicular to the contours of the target distribution.
Unfortunately, the overall improvement is rather limited -- non-magnitude-preserving transitions appear to us to be necessary but not sufficient for efficiency in high dimension.

Our results in Section \ref{sec:logit} replicate the behavior of other PDMPs on big data problems explored in detail by \textcite{bouchardcote15}, \textcite{bierkens16}, \textcite{pollock16}, and \textcite{vanetti17}.
It is clear that these algorithms can achieve state-of-the-art performance in this setting through the use of subsampling and precomputed control variates.
For logistic regression, this technique is attractive because the posterior mode is easily obtained using classical techniques.

One difficulty with PDMPs well-illustrated by our work can be seen in the trace plots under the Gaussian target with $p=1,\!000$.
The trajectories produced by both the process of Section \ref{sec:plmps} and the bouncy particle sampler, while clearly not indicative of good mixing, are also not entirely atypical to those often seen in practice.
In standard MCMC settings, such trace plots indicate diffusive behavior, which may lead practitioners to conclude that since the Markov chain is moving slowly through the state space, variance is likely to be underestimated.
For PDMPs, this doesn't follow: intuitively, it is possible for a non-reversible algorithm to \emph{always} move rapidly through the state space, and yet still converge slowly due to moving primarily in directions orthogonal to those needed to ensure good mixing.
Our use of hyperspherical coordinates makes the above easy to visualize: a non-reversible process can move rapidly in the $\theta$ and $\v{\phi}$ dimensions while moving arbitrarily slowly in the $r$ dimension.
Thus, non-reversible MCMC methods require additional care to diagnose convergence and ensure posterior estimates are reliable.

Further research in PDMPs is needed to understand their behavior on high-dimensional targets.
Our use of hyperspherical coordinates to derive a PDMP with a nonlinear transition function may yield improvement for certain targets of moderate dimension.
Many PDMPs resemble Hamiltonian Monte Carlo \cite{betancourt17a}, so it may be possible to connect current work with existing theory in that area.
We hope that with additional ideas substantially larger improvements are possible.

\section*{Acknowledgments}

We are grateful to Georgi Dinolov, David Draper, Mark Girolami, David Parks, Daniele Venturi, and Yuanran Zhu for their thoughts.
Membership on this list does not imply agreement with the ideas presented nor responsibility for errors that may inadvertently be present.

\section*{References}

\printbibliography

\end{document}